\theoremstyle{plain}
\newtheorem{lemm}{Lemma}
\newtheorem{coro}{Corollary}
\begin{document}

\title{Multipair Massive MIMO Two-Way Full-Duplex Relay Systems with Hardware Impairments}
%
\author{Ying~Liu,
        Xipeng~Xue,
        Jiayi~Zhang,
        Xu~Li,
        Linglong Dai,
        and~Shi Jin

\thanks{This work was supported in part by the National Natural Science Foundation of China (Grant No. 61601020) and the Fundamental Research Funds for the Central Universities (Grant Nos. 2016RC013, 2017JBM319, and 2016JBZ003). (Corresponding author: jiayizhang@bjtu.edu.cn)}%
\thanks{Y. Liu, X. Xue, J. Zhang and X. Li are with the School of Electronic and Information Engineering, Beijing Jiaotong University, Beijing 100044, P. R. China.}
\thanks{L. Dai is with the Department of Electronic Engineering, Tsinghua University, Beijing 100084, P. R. China.}
\thanks{S. Jin is with the National Mobile Communications Research Laboratory, Southeast University, Nanjing 210096, P. R. China.}
}


\maketitle
\begin{abstract}
Hardware impairments, such as phase noise, quantization errors, non-linearities, and noise amplification, have baneful effects on wireless communications. In this paper, we investigate the effect of hardware impairments on multipair massive multiple-input multiple-output (MIMO) two-way full-duplex relay systems with amplify-and-forward scheme. More specifically, novel closed-form approximate expressions for the spectral efficiency are derived to obtain some important insights into the practical design of the considered system. When the number of relay antennas $N$ increases without bound, we propose a hardware scaling law, which reveals that the level of hardware impairments that can be tolerated is roughly proportional to $\sqrt{N}$. This new result inspires us to design low-cost and practical multipair massive MIMO two-way full-duplex relay systems. Moreover, the optimal number of relay antennas is derived to maximize the energy efficiency. Finally, Motor-Carlo simulation results are provided to validate our analytical results.
\end{abstract}


\section{Introduction}
The two-way full-duplex (FD) relay system can ideally achieve almost twice of the spectral efficiency (SE) achieved by the traditional two-way half-duplex (HD) scheme, since the relay can transmit and receive signals simultaneously. However, the practical implementation of the two-way FD relay is challenging due to the severe self-interference (SI) caused by FD \cite{wong2017key, ngo2014multipair,Xie2014channel}.

Recently, massive multiple-input multiple-output (MIMO) has been proposed as an efficient approach to suppress the SI of two-way FD relay systems in the spatial domain \cite{zhang2016JSAC}. Different from most of existing works, which consider systems deploying high-cost ideal hardware components, in this paper we consider a multipair massive MIMO two-way FD relay system with low-cost non-ideal hardware that suffers from hardware impairments. In practical systems, the cost and power consumption increase with the number of radio frequency (RF) chains. In order to achieve higher energy efficiency (EE) and/or lower hardware cost, each RF chain can use some cheap hardware components \cite{zhang2017performance,zhang2016spectral}. However, low-cost hardware is particularly prone to the impairments in transceivers, such as quantization errors of low-resolution analog to digital converters (ADCs), I/Q-imbalance, phase noise, and non-linearities \cite{zhang2016achievable,Bjornson2014TIT,zhang2016constrain}.

Although the influence of hardware impairments can be mitigated by some compensation algorithms, residual impairments still exist due to time-varying and random hardware characteristics. The effect of hardware impairments on the massive MIMO two-way FD relay system has only recently been studied in \cite{xia2015hardware}, which focused on the decode-and-forward (DF) scheme at the relay. However, the signal processing complexity of the DF scheme is much higher than that of the amplify-and-forward (AF) scheme for the implementation of massive MIMO relay systems \cite{chen2015large}. Therefore, the AF scheme is more attractive for practical system design. To the best of authors' knowledge, the performance of AF based multipair massive MIMO two-way FD relay systems with hardware impairments has not been investigated in the literature, partially due to the difficulty of manipulating products of SI and hardware impairments vectors.

Motivated by the aforementioned consideration, a natural question is that, whether the low-cost non-iedeal hardware can be deployed at the AF based multipair massive MIMO two-way FD relay system without sacrificing the expected performance gains? In this paper, we try to answer this question with the following contributions:
\begin{itemize}
\item An analytical SE approximation of multipair massive MIMO two-way FD relay systems with hardware impairments is derived in closed-form. The effect of both the number of relay antennas and the level of transceiver hardware impairments on the SE has been investigated.

\item A hardware scaling law has been presented to show that one can tolerate larger level of hardware impairments as the number of antennas increases. This is an analytic proof that the considered system can be deployed with low-cost hardware components.

\item The sum SEs of FD and HD systems have been compared with different levels of hardware impairments. It is interesting to find that the FD system with hardware impairments can achieve the same SE of the HD system with larger loop interferences. Finally, we derive the optimal number of relay antennas to maximize the EE.
\end{itemize}



\section{System Model}\label{se:system}

%
We consider a massive MIMO two-way FD relay system where $K$ pairs of devices on two sides communicate with each other through a single relay ${T_R}$. The devices are denoted as $ {{T_{{A_i}}}} $ and ${{T_{{B_i}}}} $, for $i = 1,\ldots,K $, respectively. The devices could be sensors that exchange a small amount of information or small cell base stations which need high throughput links. The relay is equipped with $2N$ antennas, where $N$ antennas are used for transmission, and the other $N$ antennas are used for reception. Each device is equipped with one receive and one transmit antenna. In addition, the relay and devices are assumed to work in FD mode, i.e., they can transmit and receive signals at the same time. We further assume that there is no direct communication link between each pair of devices due to heavy shadowing and/or large path loss. The device is interfered by other devices on the same side.

\subsection{Channel Model}
Block fading is considered in this paper. This means that it is an ergodic process with a static channel realization in a coherence block and the realizations in blocks are independent. Then, we define ${\bf{G_u}} \buildrel \Delta \over = \left[ {{{\bf{g}}_{u1}},\ldots,{{\bf{g}}_{uK}}} \right]$ and ${\bf{H_u}} \buildrel \Delta \over = \left[ {{{\bf{h}}_{u1}},\ldots,{{\bf{h}}_{uK}}} \right]$, where ${{\bf{g}}_{ui}} \in {\mathbb{C}^{N \times 1}}$ and ${{\bf{h}}_{ui}}\in {\mathbb{C}^{N \times 1}}$ $\left(i = 1,\ldots,K \right)$, which denote the uplink channels between ${T_{{A_i}}}$ and ${T_R}$, ${T_{{B_i}}}$ and ${T_R}$, respectively. In addition, the downlink channels between ${T_{{A_i}}}$ and ${T_R}$, ${T_{{B_i}}}$ and ${T_R}$ are given by ${\bf{G_d}} \buildrel \Delta \over = \left[ {{{\bf{g}}_{d1}},\ldots,{{\bf{g}}_{dK}}} \right]$ and ${\bf{H_d}} \buildrel \Delta \over = \left[ {{{\bf{h}}_{d1}},\ldots,{{\bf{h}}_{dK}}} \right]$, where ${{\bf{g}}_{di}} \in {\mathbb{C}^{N \times 1}}$ and ${{\bf{h}}_{di}}\in {\mathbb{C}^{N \times 1}} $ $\left(i = 1,\ldots,K \right)$, respectively. Furthermore, $\bf{G_u}$, $\bf{H_u}$, $\bf{H_d}$ and $\bf{H_d}$ are assumed to follow the independent identically distributed (i.i.d.) Rayleigh fading, i.e., the elements of ${{\bf{g}}_{uK}}$, ${{\bf{h}}_{uK}}$, ${{\bf{g}}_{dK}}$, and ${{\bf{h}}_{dK}}$ are i.i.d. ${\cal{CN}}({\bf{0}},\sigma _{{g_{ui}}}^2)$, ${\cal{CN}}({\bf{0}},\sigma _{{h_{ui}}}^2)$, ${\cal{CN}}({\bf{0}},\sigma _{{g_{di}}}^2)$, and ${\cal{CN}}({\bf{0}},\sigma _{{h_{di}}}^2)$ random variables \cite{zhang2016spectral}. Furthermore, $\bf{G_u}$, $\bf{H_u}$, $\bf{G_d}$, and $\bf{H_d}$ can be expressed as ${\bf{G_u}} = {{\bf{S}}_{gu}}{\bf{D}}_{gu}^{1/2}$, ${\bf{H_u}} = {{\bf{S}}_{hu}}{\bf{D}}_{hu}^{1/2}$, ${\bf{G_d}} = {{\bf{S}}_{gd}}{\bf{D}}_{gd}^{1/2}$, and ${\bf{H_d}} = {{\bf{S}}_{hd}}{\bf{D}}_{hd}^{1/2}$, respectively, where ${{\bf{S}}_{gu}}$, ${{\bf{S}}_{hu}}$, ${{\bf{S}}_{gd}}$, and ${{\bf{S}}_{hd}}$ stand for the small-scale fading and their elements are all i.i.d. ${\cal{CN}}({\bf{0}},{\bf{1}})$ random variables. On the other hand, ${{\bf{D}}_{gu}}$, ${{\bf{D}}_{hu}}$, ${{\bf{D}}_{gd}}$, and ${{\bf{D}}_{hd}}$ are diagonal matrices representing the large-scale fading, and the $k$th diagonal elements are denoted as $\sigma _{{g_{uk}}}^2$, $\sigma _{{h_{uk}}}^2$, $\sigma _{{g_{dk}}}^2$, and $\sigma _{{h_{dk}}}^2$, respectively.

Furthermore, let ${{\bf{G}}_{RR}} \in {\mathbb{C}^{N \times N}}$ denote the SI matrix between the transmit and receive arrays of the relay due to the FD mode. Each row of ${{\bf{G}}_{RR}}$ such as ${\bf{G}}_{RRi}$ denotes the channel between $i$th receive antenna and all transmit antennas of the relay. ${\Omega _{k,i}}$ is the inter-device interference channel coefficient from $i$th device to $k$th device. Note that ${\Omega _{k,k}}$ denotes the SI at the $k$th device. The elements of ${{\bf{G}}_{RR}}$ and ${\Omega _{k,i}}$ are random variables following the i.i.d. complex Gaussian distribution, e.g., ${\cal{CN}}({\bf{0}},\sigma _{{LIr}}^2)$ and ${\cal{CN}}({\bf{0}},\sigma _{{k,i}}^2)$, respectively \cite{ngo2014multipair}.

\subsection{Hardware Impairments}

As shown in \cite{bjornson2014massive}, the residual hardware impairments at the transmitter and receiver can be modeled as additive distortion noises that are proportional to the signal power. Thus, the additive distortion term ${{\boldsymbol{\eta} _r}} $ describes the residual impairments of receiver at the relay and is proportional to the instantaneous power of received signals at the relay antenna as ${\boldsymbol{\eta} _r} \sim {\cal{CN}}\left({\bf{0}},\kappa _r^2 {\tt{diag}} \left({{W}}_{11}, \dots,{{W}}_{NN} \right)\right)$, where $W_{ii}$ is the $i$th diagonal element of the covariance matrix ${\bf{W}} = \sum_{j=1}^{K} {P_U}({\bf{h}}_{uj} {\bf{h}}_{uj}^H +{\bf{g}}_{uj} {\bf{g}}_{uj}^H)+\frac{{{P_R}}}{N}\sum_{j=1}^{N} {{\bf{G}}_{RRj}{\bf{G}}_{RRj}^H}$ with $P_U$ being the power constraint of the device and ${P_R}$ being the transmit power of relay \cite{bjornson2014massive}. Furthermore, the proportionality coefficient $\kappa _r$ describes the level of hardware impairments and is related to the received error vector magnitude (EVM) \cite{Bjornson2014TIT}. Note that the EVM is a common quality indicator of the signal distortion magnitude, and it can be defined as the ratio of the signal distortion to the signal magnitude. For example, the EVM at relay can be defined as \cite[Eq. (5)]{Bjornson2014TIT}
\begin{align}\label{definition of EVM}
EV{M_r} = \sqrt {\frac{{E\left\{ {{{\left\| {{{\bf{\eta }}_r}} \right\|}^2}\left| \Im  \right.} \right\}}}{{E\left\{ {{{\left\| {\bf{x}} \right\|}^2}\left| \Im  \right.} \right\}}}}  = \sqrt {\frac{{tr\left( {\kappa _r^2{\bf{W}}} \right)}}{{tr\left( {\bf{W}} \right)}}}  = {\kappa _r},
\end{align}
where $\Im$ denotes the set of channel realizations (i.e., ${{\bf{g}}_u},{{\bf{h}}_u} \in \Im $). Furthermore, 3GPP LTE suggests that the EVM should be smaller than 0.175 \cite{Bjornson2014TIT}.

\subsection{Signal Transmission}
At the time instant $n$, all devices ${T_{{A_i}}}$ and ${T_{{B_i}}}$ $\left(i = 1,\ldots,K \right)$ transmit their signals ${{x_{{A_i}}}\left( n \right)}$ and ${{x_{{B_i}}}\left( n \right)}$ to the relay ${T_R}$, respectively, and ${T_R}$ broadcasts its processed previously received signal ${{{\bf{y}}_t}\left( n \right)}$ to all devices.

First, we assume that ${{x_{{A_i}}}\left( n \right)}$ and ${{x_{{B_i}}}\left( n \right)}$ are Gaussian distributed signals. Due to the FD mode, ${T_R}$ also receives the signal, i.e., ${{{\bf{y}}_t}}\left( n \right)$ which is broadcasted to all devices. Thus, at the time instant $n$, the received signal at ${T_R}$ is given by
\begin{align}\label{receive signals at relay}
{{\bf{y}}_r}\left( n \right) = {\bf{Ax}}\left( n \right) + {{\bf{G}}_{RR}}{{\bf{y}}_t}\left( n \right) + {{\boldsymbol{\eta }}_r} + {{\bf{n}}_R}\left( n \right),
\end{align}
where ${\bf{A}} \buildrel \Delta \over = \left[ {{{\bf{G}}_u},{{\bf{H}}_u}} \right]$, ${{\bf{x}}\left( n \right)} \buildrel \Delta \over = {\left[ {{\bf{x}}_A^T\left( n \right),{\bf{x}}_B^T\left( n \right)} \right]^T}$ with
${{\bf{x}}_A\left( n \right)} \buildrel \Delta \over = \left[ {{x_{{A_1}}\left( n \right)},\ldots,{x_{{A_K}}\left( n \right)}} \right]$ and ${{\bf{x}}_B\left( n \right)} \buildrel \Delta \over = \left[ {{x_{{B_1}}\left( n \right)},\ldots,{x_{{B_K}}}\left( n \right)} \right]$, and ${{{\bf{n}}}_R\left( n \right)} \sim {\cal{CN}}(0,\sigma _R^2{{\bf{I}}_N})$ denotes an additive white Gaussian noise (AWGN) vector at ${T_R}$.

Then, we analyze the received signal at devices. At the time instant $n$ $\left( {n > 1} \right)$, the relay using the simple AF protocol amplifies the previously received signal ${{\bf{y}}_r}\left( {n - 1} \right)$ and broadcasts it to the devices. Therefore, the transmit signal vector at the relay is given by
\begin{align}\label{transimition signals at relay}
{{{\bf{y}}}_t^{'}}\left( n \right) = \rho {\bf{F}}{{\bf{y}}_r}\left( {n - 1} \right),
\end{align}
where ${\bf{F}} \in {\mathbb{C}^{N \times N}} $ is the precoding matrix and ${\rho}$ is the amplification factor. Then ${T_R}$ broadcasts ${{{\bf{y}}}_t^{'}}\left( n \right)$ to all devices. However, due to the hardware impairments of RF chains at the transmitter, ${T_R}$ actually broadcasts ${{\bf{y}}_t}\left( n \right)$ to all devices as
\begin{align}\label{real transimition signals at relay}
{{\bf{y}}_t}\left( n \right) = {{{\bf{y}}}_t^{'}}\left( n \right) + {{\boldsymbol{\eta }}_t} = \rho {\bf{F}}{{\bf{y}}_r}\left( {n - 1} \right) + {{\boldsymbol{\eta }}_t},
\end{align}
where ${\boldsymbol{\eta} _t} \sim {\cal{CN}}\left({\bf{0}},\kappa _t^2\frac{{P_R}}{N} {{\bf{I}}_N}\right)$ with the proportionality parameters $\kappa _t$ characterizing the level of hardware impairment at the transmitter. Here we assume that each antenna has the same power and ${T_R}$ can obtain perfect channel state information (CSI) according to uplink pilots from the devices, and the devices can then obtain CSI through channel reciprocity \cite{ngo2013energy}. Due to the power constraint of the relay $P_R$, ${\rho}$ is normalized by the instantaneous received signal power
\begin{align}
\rho  = \sqrt {\frac{{{P_R}}}{{{P_U}{{\left\| {{\bf{FA}}} \right\|}^2} + \frac{{{P_R}}}{N}{{\left\| {{\bf{F}}{{\bf{G}}_{RR}}} \right\|}^2} + {{\left\| {{\bf{F}}{{\bf{\eta }}_r}} \right\|}^2} + \sigma _R^2{{\left\| {\bf{F}} \right\|}^2}}}} .\notag
\end{align}

At the relay, we adopt the low-complexity maximum-ratio (MR) scheme suitable for low-cost massive MIMO deployment \cite{ngo2013energy}. Therefore, the precoding matrix ${\bf{F}}$ can be written as ${\bf{F}} = {{\bf{B}}^*}{{\bf{A}}^H}$,
where ${\bf{B}} \buildrel \Delta \over = \left[ {{{\bf{H}}_d},{{\bf{G}}_d}} \right]$, ${{\bf{G}}_d} \buildrel \Delta \over = \left[ {{{\bf{g}}_{d1}}, \ldots ,{{\bf{g}}_{dK}}} \right]$ and ${{\bf{H}}_d} \buildrel \Delta \over = \left[ {{{\bf{h}}_{d1}}, \ldots ,{{\bf{h}}_{dK}}} \right]$.
To our best knowledge, it is very challenging to analyze the residual loop interference power if substituting \eqref{real transimition signals at relay} into \eqref{receive signals at relay} iteratively. However, the residual loop interference can be modeled as additional Gaussian noise. This is due to the fact that the loop interference can be significantly degraded and the residual loop interference is too weak by applying loop interference mitigation schemes \cite{zhang2016JSAC}.

Following similar steps in \cite{zhang2016JSAC}, ${{\bf{y}}_r}\left( n \right)$ in \eqref{receive signals at relay} can be approximated by a Gaussian noise source ${{{\bf{\tilde y}}}_r}\left( n \right)$ with $  {{\mathbb{E}}\left\{ {{{{\bf{\tilde y}}}_r}\left( n \right){\bf{\tilde y}}_r^H\left( n \right)} \right\} = \frac{{{P_R}}}{N}{{\bf{I}}_N}}  $. Furthermore, ${T_{{A_i}}}$ and ${T_{{B_i}}}$ receive the combined signal as
\begin{align}
{Z_{{A_i}}}\left( n \right) = {\bf{g}}_{di}^T{{\bf{y}}_t}\left( n \right) + \sum\limits_{i,k \in {U_A}} {{\Omega _{i,k}}{x_{{A_k}}}\left( n \right)}  + {n_{{A_i}}}\left( n \right) ,\label{expression of zai}\\
{Z_{{B_i}}}\left( n \right) = {\bf{h}}_{di}^T{{\bf{y}}_t}\left( n \right) + \sum\limits_{i,k \in {U_B}} {{\Omega _{i,k}}{x_{{B_k}}}\left( n \right)}  + {n_{{B_i}}}\left( n \right),
\end{align}
where the noise ${n_{{A_i}}\left( n \right)}$ and ${n_{{B_i}}\left( n \right)}$ are AWGN with ${n_{{A_i}}\left( n \right)} \sim {\cal{CN}}(0,\sigma _{{A_i}}^2)$ and ${n_{{B_i}}\left( n \right)} \sim {\cal{CN}}(0,\sigma _{{B_i}}^2)$, respectively. In the following, we only discuss the analytical result for ${T_{{A_i}}}$. The corresponding result of ${T_{{B_i}}}$ can be obtained by replacing ${T_{{A_i}}}$ with ${T_{{B_i}}}$. Note that the relay can only receive signal and the transmission part keeps silent at the first time slot $\left( n=1 \right)$, during which the received signals at the relay and devices are respectively given by
\begin{align}\label{first time slot}
&{{\bf{y}}_r}\left( 1 \right) = {\bf{Ax}}\left( 1 \right) + {{\boldsymbol{\eta }}_r} + {{\bf{n}}_R}\left( 1 \right)\\
&{Z_{{A_i}}}\left( 1 \right) = {\bf{g}}_{di}^T{{\bf{y}}_t}\left( 1 \right) + {n_{{A_i}}}\left( 1 \right),i = 1, \ldots ,K.
\end{align}
For simplicity, the time label $n$ is omitted in the following \cite{zhang2016spectral}. Substituting \eqref{receive signals at relay} and \eqref{real transimition signals at relay} into \eqref{expression of zai}, the combined received signal ${Z_{{A_i}}} $ can be expressed as
\begin{align}\label{expression of expanded zai}
&{Z_{{A_i}}} =  \underbrace {\rho {\bf{g}}_{di}^T{\bf{F}}{{\bf{h}}_{ui}}{x_{{B_i}}}}_{\text {desired signal}}\!+\!\underbrace {\rho \sum\limits_{j = 1,j \ne i}^K {\left( {{\bf{g}}_{di}^T{\bf{F}}{{\bf{g}}_{uj}}{x_{{A_j}}}\! + \! {\bf{g}}_{di}^T{\bf{F}}{{\bf{h}}_{uj}}{x_{{B_j}}}} \right)} }_{\text  {inter-pair interference}} \notag \\
&+\underbrace {\rho {\bf{g}}_{di}^T{\bf{F}}{{\bf{g}}_{ui}}{x_{{A_i}}}}_{\text {self-interference}} + \underbrace {\rho {\bf{g}}_{di}^T{\bf{F}}{{\bf{G}}_{RR}}{{{\bf{\tilde y}}}_r}}_{\text {loop interference}}  + \underbrace {\sum\limits_{i,k \in {U_A}} {{\Omega _{i,k}}{x_{{A_k}}}} }_{\text { inter-device interference by FD mode}} \notag \\
&+ \underbrace {\rho {\bf{g}}_{di}^T{\bf{F}}{{\boldsymbol{\eta }}_r} + {\bf{g}}_{di}^T{{\boldsymbol{\eta }}_t}}_{\text {hardware impairments}} + \underbrace {\rho {\bf{g}}_{di}^T{\bf{F}}{{\bf{n}}_R} + {n_{{A_i}}}}_{\text {compound noise}},
\end{align}
where we use the set notation of ${U_A} = \left\{ {1,3, \ldots ,2K - 1} \right\}$ or ${U_B} = \left\{ {2,4, \ldots ,2K} \right\}$ to represent the devices on bothsides of relay. Note that one set of devices (${U_A}$) can not exchange information with the other set (${U_B}$) directly. From \eqref{expression of expanded zai}, we can find that ${Z_{{A_i}}}$ is composed of seven terms: the signal that ${T_{{A_i}}}$ desires to receive, the inter-pair interference due to other devices' signal, the SI from the device, the loop interference from the relay, the inter-device interference caused by other devices due to FD mode, the distortion noise induced by hardware impairments at the relay, and the compound noise.

With the power constraint of the relay and perfect CSI, the FD relay can take advantage of massive antennas and simple SI cancellation (SIC) schemes to eliminate the SI \cite{zhang2016JSAC}. Furthermore, the interference and noise power can be obtained by taking expectation with respect to interference and noise within one coherence block of channel fading. As a result, the SE of ${T_{{A_i}}}$ is given by
\begin{align}\label{ergodic achievable rate}
{R_{{A_i}}} = {\mathbb{E}}\left\{ {{{\log }_2}\left( {1 + {\text{SINR}}_{{A_i}}} \right)} \right\},\;\;\; {\text{for}} \ i = 1,\ldots,K,
\end{align}
where ${\text{SINR}}_{{A_i}}$ denotes the signal-to-interference plus noise ratio (SINR) of $A_i$ and can be expressed as
\begin{align}
{\text{SINR}}_{{A_i}} \!=\! \frac{{{P_U}{{\left| {{\bf{g}}_{di}^T{\bf{F}}{{\bf{h}}_{ui}}} \right|}^2}}}{{A' \!+\! B' \!+\! C' \!+\! D' \!+\! E' \!+\! {{\left| {{\bf{g}}_{di}^T{\bf{F}}{{\boldsymbol{\eta }}_r}} \right|}^2} \!+\! \frac{1}{{{\rho ^2}}}{{\left| {{\bf{g}}_{di}^T{{\boldsymbol{\eta }}_t}} \right|}^2}}},\notag
\end{align}
where $A' \buildrel \Delta \over = {P_U}\sum\limits_{j = 1,j \ne i}^K {\left( {{{\left| {{\bf{g}}_i^T{\bf{F}}{{\bf{g}}_j}} \right|}^2} + {{\left| {{\bf{g}}_i^T{\bf{F}}{{\bf{h}}_j}} \right|}^2}} \right)} $, $B' \buildrel \Delta \over = \sigma _R^2{\left\| {{\bf{g}}_i^T{\bf{F}}} \right\|^2}$, $C' \buildrel \Delta \over = \frac{{\sigma _{Ai}^2}}{{{\rho ^2}}}$, $D' \buildrel \Delta \over = {\left| {{\bf{g}}_i^T{\bf{F}}{{\bf{G}}_{RR}}{{{\bf{\tilde y}}}_r}} \right|^2}$, and $E' \buildrel \Delta \over = \frac{{{P_U}}}{{{\rho ^2}}}\sum\limits_{i,k \in {U_A}} {\sigma _{i,k}^2} $, respectively.

\section{Performance Analysis}\label{analysis}
To the best of authors' knowledge, the exact derivation of \eqref{ergodic achievable rate} is really difficult \cite{jin2015ergodic}. Herein we consider the asymptotic scenario when $N \to \infty$, which is the large system limit. Utilizing the convexity of $\log_2{(1+1/x)}$ and Jensen's inequality, the lower bound of ${R_{{A_i}}}$ in \eqref{ergodic achievable rate} can be written as
\begin{align}\label{ergodic achievable rate low bound}
{R_{{A_i}}} \ge {\tilde R_{{A_i}}}   = {\log _2}\left( {1 + \frac{1}{{\mathbb{E}}\left\{ {{{\left[ {{\text{SINR}}_{{A_i}}} \right]}^{ - 1}}} \right\}}} \right).
\end{align}
Based on \eqref{ergodic achievable rate low bound} and considering devices at both sides, we can obtain the sum SE of the multipair massive MIMO two-way FD relay system as
\begin{align}\label{sum SE}
R_\text{sum} = \sum\limits_{i = 1}^K {\left( {\tilde R_{{A_i}}} + {\tilde R_{{B_i}}} \right)} .
\end{align}

Note that, in the remainder of the paper, we only show the analytical results for ${R_{{A_i}}}$ since the formula of ${R_{{B_i}}}$ is symmetric with that of ${R_{{A_i}}}$.
In the following, we present the SE of $A_i$ in \textbf{Lemma 1}.
\begin{lemm}\label{lemm1}
With hardware impairments and MR processing at the relay, ${\tilde  R_{{A_i}}}$ can be approximated as
\begin{align}\label{An approximate ergodic achievable rate}
{\tilde R_{{A_i}}}\!-\! {\log _2}\left( {1 \!+\! \frac{N}{{{A_i} \!+\! {B_i} \!+\! {C_i} \!+\! {D_i} \!+\! {E_i} \!+\! {F_i} \!+\! {G_i}}}} \right)\!\xrightarrow[{N \to \infty }]{}\! 0,
\end{align}
where ${A_i} \buildrel \Delta \over = \sum\limits_{j = 1,j \ne i}^K {\left( {\frac{{\sigma _{{h_{uj}}}^2}}{{\sigma _{{h_{ui}}}^2}} + \frac{{\sigma _{{h_{uj}}}^4\sigma _{{g_{dj}}}^2}}{{\sigma _{{h_{ui}}}^4\sigma _{{g_{di}}}^2}} + \frac{{\sigma _{{g_{uj}}}^2}}{{\sigma _{{h_{ui}}}^2}} + \frac{{\sigma _{{g_{uj}}}^4\sigma _{{h_{dj}}}^2}}{{\sigma _{{h_{ui}}}^4\sigma _{{g_{di}}}^2}}} \right)} $, ${B_i} \buildrel \Delta \over = \frac{{\sigma _R^2}}{{{P_U}\sigma _{{h_{ui}}}^2}}$, ${C_i} \buildrel \Delta \over = \frac{{\sigma _{{A_i}}^2J}}{{{P_R}{P_U}\sigma _{{g_{di}}}^4\sigma _{{h_{ui}}}^4}}$, ${D_i} \buildrel \Delta \over = \frac{{\kappa _r^2\left( {{P_U}\sum\limits_{j = 1}^K {\left( {\sigma _{{h_{uj}}}^2 + \sigma _{{g_{uj}}}^2} \right)}  + {P_R}\sigma _{LIr}^2} \right)}}{{{P_U}\sigma _{{h_{ui}}}^2}}$, ${E_i} \buildrel \Delta \over = \frac{{\kappa _t^2J}}{{{P_U}\sigma _{{g_{di}}}^2\sigma _{{h_{ui}}}^4}}$, ${F_i} \buildrel \Delta \over = \frac{{{P_R}\sigma _{LIr}^2}}{{{P_U}\sigma _{{h_{iu}}}^2}}$, ${G_i} \buildrel \Delta \over = \frac{{J\sum\limits_{i,k \in {U_A}} {\sigma _{i,k}^2} }}{{{P_R}\sigma _{{g_{di}}}^4\sigma _{{h_{ui}}}^4}}$, and \\
$J \buildrel \Delta \over = {P_U}\sum\limits_{i = 1}^K {\left( {\sigma _{{g_{ui}}}^4\sigma _{{h_{di}}}^2 \!+ \!\sigma _{{g_{di}}}^2\sigma _{{h_{ui}}}^4} \right)} \! +\! \frac{{\kappa _r^2}}{N}\sum\limits_{i = 1}^K {\left( {\sigma _{{g_{ui}}}^2\sigma _{{h_{di}}}^2 \!+\! \sigma _{{g_{di}}}^2\sigma _{{h_{ui}}}^2} \right)}\left( {{P_U}\sum\limits_{j = 1}^K {\left( {\sigma _{{g_{uj}}}^2\! + \!\sigma _{{h_{uj}}}^2} \right)} \! +\! {P_R}\sigma _{LIr}^2} \right)$.
\end{lemm}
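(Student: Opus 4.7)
The plan is to substitute the MR precoder $\mathbf{F} = \mathbf{B}^* \mathbf{A}^H = \mathbf{H}_d^* \mathbf{G}_u^H + \mathbf{G}_d^* \mathbf{H}_u^H$ into the $\mathrm{SINR}_{A_i}$ expression and then evaluate each bilinear and quadratic form asymptotically using the law of large numbers for i.i.d.\ Rayleigh vectors, namely $\tfrac{1}{N}\|\mathbf{g}_{ui}\|^2 \xrightarrow{a.s.} \sigma_{g_{ui}}^2$, $\tfrac{1}{N}\|\mathbf{h}_{ui}\|^2 \xrightarrow{a.s.} \sigma_{h_{ui}}^2$, and $\tfrac{1}{N}\mathbf{g}_{uk}^H\mathbf{h}_{ui} \xrightarrow{a.s.} 0$ whenever the vectors are independent. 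In particular, only the $k=i$ term of $\mathbf{g}_{di}^T\mathbf{F}\mathbf{h}_{ui} = \sum_{k}(\mathbf{g}_{di}^T\mathbf{h}_{dk}^* \mathbf{g}_{uk}^H\mathbf{h}_{ui} + \mathbf{g}_{di}^T\mathbf{g}_{dk}^*\mathbf{h}_{uk}^H\mathbf{h}_{ui})$ is of order $N^2$, yielding $|\mathbf{g}_{di}^T\mathbf{F}\mathbf{h}_{ui}|^2 = N^4\sigma_{g_{di}}^4\sigma_{h_{ui}}^4 + o(N^4)$. This fixes the $N$-scaling of the numerator and, together with $P_U$, is what I would divide every denominator term by to read off $A_i,\ldots,G_i$.

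Next I would evaluate each denominator piece term by term. For the inter-pair interference $A'$, the four pairs $(\mathbf{g}_j,\mathbf{h}_j)\times(\mathbf{g}_{di},\mathbf{F})$ each produce an order-$N^3$ contribution via one surviving inner product, which after normalizing by $P_U\cdot N^4\sigma_{g_{di}}^4\sigma_{h_{ui}}^4$ collapses to the four terms inside $A_i$. The thermal-noise pieces $B'$ and $C'$ use $\|\mathbf{g}_{di}^T\mathbf{F}\|^2 \sim N^3 \sigma_{g_{di}}^2(\sigma_{g_{di}}^2\sigma_{h_{ui}}^2+\cdots)$, producing $B_i$ and, in combination with $1/\rho^2$, the constant $C_i$. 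The loop-interference term $D'$ is handled by using $\mathbb{E}\{\tilde{\mathbf{y}}_r\tilde{\mathbf{y}}_r^H\}=\tfrac{P_R}{N}\mathbf{I}_N$ and $\mathbf{G}_{RR}\mathbf{G}_{RR}^H \sim N\sigma_{LIr}^2\mathbf{I}_N$, yielding $F_i$. The inter-device term $E'$ similarly combines with $\rho^{-2}$ to form $G_i$. The hardware-impairment terms require conditioning: since $\boldsymbol{\eta}_r\,|\,\Im \sim \mathcal{CN}(\mathbf{0},\kappa_r^2\,\mathrm{diag}(W_{11},\ldots,W_{NN}))$, one has $\mathbb{E}\{|\mathbf{g}_{di}^T\mathbf{F}\boldsymbol{\eta}_r|^2\mid\Im\} = \kappa_r^2 \sum_n |[\mathbf{g}_{di}^T\mathbf{F}]_n|^2 W_{nn}$, which after expanding $\mathbf{W}$ and applying LLN produces $D_i$. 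The transmitter distortion $\frac{1}{\rho^2}|\mathbf{g}_{di}^T\boldsymbol{\eta}_t|^2$ reduces to $\kappa_t^2\frac{P_R}{N}\|\mathbf{g}_{di}\|^2/\rho^2$, which, once $\rho^2$ is inserted, becomes $E_i$.

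The third key ingredient is computing $\rho^2$ asymptotically. I would expand its denominator: $P_U\|\mathbf{FA}\|^2$ is of order $N^3$ and dominates $\tfrac{P_R}{N}\|\mathbf{F}\mathbf{G}_{RR}\|^2$ and $\sigma_R^2\|\mathbf{F}\|^2$, while $\|\mathbf{F}\boldsymbol{\eta}_r\|^2$ contributes a $\kappa_r^2$-weighted piece through $\mathbf{W}$ that must be retained at order $N^2$ to recover the $\kappa_r^2/N$ correction appearing in $J$. Careful bookkeeping yields $\rho^{-2} = J/(P_R N^2) + o(1/N^2)$, which is precisely the factor needed so that $C'/\rho^2$ and $E'/\rho^2$, once divided by the signal power, become $C_i$ and $G_i$.

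Putting the pieces together, $\mathrm{SINR}_{A_i}^{-1}$ converges almost surely to $(A_i+B_i+C_i+D_i+E_i+F_i+G_i)/N$; a dominated-convergence argument (using the continuity of $\log_2(1+1/\cdot)$ inside the Jensen lower bound \eqref{ergodic achievable rate low bound}) then exchanges expectation and limit to give the claimed asymptotic equivalence. I expect the main obstacle to be two-fold: first, the loop-interference contribution couples $\mathbf{G}_{RR}$, $\tilde{\mathbf{y}}_r$, and $\mathbf{F}$ through the Gaussian approximation of $\mathbf{y}_r$, so one must verify that only the leading $N^3$ contribution survives and all cross terms vanish; second, the $\kappa_r^2$ piece inside $J$ requires tracing the diagonal of $\mathbf{W}$, which contains the loop-interference power $\sigma_{LIr}^2$ via the sum $\tfrac{P_R}{N}\sum_j \mathbf{G}_{RRj}\mathbf{G}_{RRj}^H$, through both the distortion term $D_i$ and the normalization $\rho^2$ without double-counting.
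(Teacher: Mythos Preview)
Your plan is essentially the paper's own proof: decompose $\mathbb{E}\{\mathrm{SINR}_{A_i}^{-1}\}$ into the seven ratios, apply the law of large numbers to each bilinear form $\mathbf{g}_{di}^T\mathbf{F}\mathbf{h}_{ui}$, $\mathbf{g}_{di}^T\mathbf{F}\mathbf{g}_{uj}$, $\mathbf{g}_{di}^T\mathbf{F}\boldsymbol{\eta}_r$, $\mathbf{g}_{di}^T\mathbf{F}\mathbf{G}_{RR}$, evaluate $\rho^2$ from $\|\mathbf{FA}\|^2$, $\|\mathbf{F}\mathbf{G}_{RR}\|^2$, $\|\mathbf{F}\boldsymbol{\eta}_r\|^2$, $\|\mathbf{F}\|^2$, and substitute back. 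Your identification of which inner products survive at each order of $N$ matches the paper's Appendix line for line, including the observation that the $\kappa_r^2$ piece of $\|\mathbf{F}\boldsymbol{\eta}_r\|^2$ must be retained one order below $\|\mathbf{FA}\|^2$ to produce the $\kappa_r^2/N$ correction in $J$.

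One slip to fix: you write $\rho^{-2} = J/(P_R N^2) + o(1/N^2)$, but since $P_U\|\mathbf{FA}\|^2 \sim N^3 \cdot 2K\tilde\mu$ and $\rho^2 = P_R/(\text{that denominator})$, the correct scaling is $\rho^{-2} \sim N^3 J / P_R$, not $N^{-2}$. You clearly used the right scaling implicitly when you matched $C'$, $E'$, and the transmitter-distortion term to $C_i$, $G_i$, $E_i$, so this appears to be a transcription error rather than a conceptual one; just correct it before writing out the argument.
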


\begin{proof}
Please refer to Appendix.
\end{proof}
From \textbf{Lemma 1}, it is clear to see that the SE ${R_{{A_i}}}$ increases with the number of antennas $N$. Further insights can be gained by investigating the terms ${A_i}$, ${B_i}$, ${C_i}$, ${D_i}$, ${E_i}$, ${F_i}$, and ${G_i}$ in \eqref{An approximate ergodic achievable rate}, respectively. First, we focus on the inter-device interference term ${A_i}$ caused by the broadcasting signal from the relay. The SE ${R_{{A_i}}}$ increases when we enlarge the values of ${\sigma _{{g_{di}}}^2}$ and ${\sigma _{{h_{ui}}}^2}$, which indicates that reducing the channel fading of the $i$th device pair. However, ${R_{{A_i}}}$ will decrease if we enlarge ${\sigma _{{g_{uj}}}^2}$ and ${\sigma _{{h_{uj}}}^2}$, for $j \ne i$, which means that reducing the channel fading of other device pairs except the $i$th device pair. This finding is consistent with the result in \cite{jin2015ergodic}.

Furthermore, \textbf{Lemma \ref{lemm1}} reveals that ${B_i}$ consists of the transmit power of ${T_{{B_i}}}$ and the channel fading ${\sigma _{{h_{ui}}}^2}$ from ${T_{{B_i}}}$ to ${T_R}$. Therefore, we can increase the transmit power of ${T_{{B_i}}}$ and/or decrease ${\sigma _{{h_{ui}}}^2}$ to increase ${R_{{A_i}}}$. Then, from ${C_i}$, we can find that ${R_{{A_i}}}$ increases when the transmit power for ${T_R}$ and the transmit power of devices increase, but decreases when $\rho$ becomes large. Moreover, it is clear to see from \eqref{An approximate ergodic achievable rate} that the detrimental effect of hardware impairments in ${D_i}$, ${E_i}$ and ${G_i}$ on the SE ${R_{{A_i}}}$. Finally, the loop interference due to the FD mode in ${F_i}$ and ${G_i}$ can also reduce the SE.

In order to show how fast the hardware impairments can increase with $N$ while maintaining the constant rate, we establish an important hardware scaling law in the following corollary.
\begin{coro}\label{coro:1}
Suppose the hardware impairment parameters are replaced by $\kappa _{\rm{r}}^{\rm{2}} = \kappa _{{\rm{0r}}}^{\rm{2}}{N^z}$ and $\kappa _{\rm{t}}^{\rm{2}} = \kappa _{{\rm{0t}}}^{\rm{2}}{N^z}$ for an initial value $\kappa_{\rm{0r}} \geq 0$, $\kappa_{\rm{0t}} \geq 0$ and a given scaling exponent $0 < z \le 1$, the SE ${\tilde  R_{{A_i}}}$, under MR processing and $N \to \infty$, converges to a non-zero limit
\begin{align}\label{Corollary 1 RA}
\left\{ \begin{array}{l}
\tilde R_{{\rm{A}}i}^{{\rm{}}} - {\log _2}\left( {1 + \frac{{\sigma _{{{\rm{g}}_{{\rm{d}}i}}}^2\sigma _{{{\rm{h}}_{{\rm{u}}i}}}^4{N^{1 - z}}}}{{\kappa _{{\rm{0r}}}^2\sigma _{{{\rm{g}}_{{\rm{d}}i}}}^2\sigma _{{{\rm{h}}_{{\rm{u}}i}}}^{\rm{2}}\xi  + {\rm{2}}K\kappa _{{\rm{0t}}}^{\rm{2}}{{\tilde \mu }^{{\rm{}}}}}}} \right)\xrightarrow[{N \to \infty }]{}0,0 < z < 1\\
\tilde R_{{\rm{A}}i}^{{\rm{}}} - {\log _2}\left( {1 + \frac{{\sigma _{{{\rm{g}}_{{\rm{d}}i}}}^2\sigma _{{{\rm{h}}_{{\rm{u}}i}}}^4}}{{\kappa _{{\rm{0r}}}^{\rm{2}}\left( {{\rm{2}}K\kappa _{{\rm{0t}}}^{\rm{2}}{{\bar \mu }^{{\rm{}}}} + \sigma _{{{\rm{g}}_{{\rm{d}}i}}}^2\sigma _{{{\rm{h}}_{{\rm{u}}i}}}^{\rm{2}}} \right)\xi  + {\rm{2}}K\kappa _{{\rm{0t}}}^{\rm{2}}{{\tilde \mu }^{{\rm{}}}}}}} \right)\\
\xrightarrow[{N \to \infty }]{}0,z = 1
\end{array} \right.
\end{align}
where $\xi  \buildrel \Delta \over = {\rm{2}}K{\mu ^{{\rm{}}}} + {P_{\rm{R}}}\sigma _{{\rm{LIr}}}^2/{P_{\rm{U}}}$, ${\mu ^{{\rm{}}}} \buildrel \Delta \over = \sum\nolimits_{j = 1}^K {\left( {\sigma _{{{\rm{h}}_{{\rm{u}}j}}}^2 + \sigma _{{{\rm{g}}_{{\rm{u}}j}}}^2} \right)} /\left( {2K} \right)$,
${\tilde \mu ^{{\rm{}}}} \buildrel \Delta \over = \sum\nolimits_{i = 1}^K {\left( {\sigma _{{{\rm{g}}_{{\rm{u}}i}}}^4\sigma _{{{\rm{h}}_{{\rm{d}}i}}}^2 + \sigma _{{{\rm{g}}_{{\rm{d}}i}}}^2\sigma _{{{\rm{h}}_{{\rm{u}}i}}}^4} \right)} /\left( {2K} \right)$ and ${\bar \mu ^{{\rm{}}}} \buildrel \Delta \over = \sum\nolimits_{i = 1}^K {\left( {\sigma _{{{\rm{g}}_{{\rm{u}}i}}}^2\sigma _{{{\rm{h}}_{{\rm{d}}i}}}^2 + \sigma _{{{\rm{g}}_{{\rm{d}}i}}}^2\sigma _{{{\rm{h}}_{{\rm{u}}i}}}^2} \right)} /\left( {2K} \right)$.
\end{coro}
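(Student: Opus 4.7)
The plan is to substitute the scaling $\kappa_r^2=\kappa_{0r}^2 N^z$ and $\kappa_t^2=\kappa_{0t}^2 N^z$ directly into the closed form of Lemma~1, track the $N$-order of each summand in the denominator, and retain only the dominant terms as $N\to\infty$. Since the logarithm is continuous, it suffices to show that $(A_i+B_i+\cdots+G_i)/N^{1-z}$ (or, respectively, $(A_i+\cdots+G_i)/N$ when $z=1$) converges to the constant displayed inside the log in \eqref{Corollary 1 RA}.

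First I would classify the seven denominator terms by their $N$-dependence after the substitution. The terms $A_i$, $B_i$, $F_i$ are $O(1)$ in $N$ and thus disappear once we divide by $N^{1-z}$ for any $z>0$. The quantity $J$ splits as $J=P_U\cdot 2K\tilde\mu + (\kappa_r^2/N)\cdot 2K\bar\mu\,P_U\xi$, using the shorthands $\xi,\mu,\tilde\mu,\bar\mu$ introduced in the corollary; for $0<z<1$ the second piece is $O(N^{z-1})\to 0$, while for $z=1$ it contributes an $O(1)$ correction $\kappa_{0r}^2\cdot 2K\bar\mu\,P_U\xi$. Consequently $C_i,G_i$, which carry one factor of $J$ without extra $\kappa$, stay bounded in both regimes and are negligible compared to the growing $D_i,E_i$.

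Next I would compute the two surviving terms explicitly. From the definition, $D_i=\kappa_r^2\xi/\sigma_{h_{ui}}^2$ becomes $\kappa_{0r}^2 N^z\xi/\sigma_{h_{ui}}^2$. The term $E_i=\kappa_t^2 J/(P_U\sigma_{g_{di}}^2\sigma_{h_{ui}}^4)$ becomes $\kappa_{0t}^2 N^z\cdot 2K\tilde\mu/(\sigma_{g_{di}}^2\sigma_{h_{ui}}^4)$ for $0<z<1$, and picks up the additional $\kappa_{0r}^2\cdot 2K\bar\mu\xi$ piece for $z=1$. Summing $D_i+E_i$ and pulling out a common factor of $N^z/(\sigma_{g_{di}}^2\sigma_{h_{ui}}^4)$ gives the denominator exactly in the form shown on the right-hand side of \eqref{Corollary 1 RA}. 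Dividing $N$ by this sum yields a factor $N^{1-z}$ in the numerator (which reproduces the $N^{1-z}$ appearing in the $0<z<1$ branch and cancels entirely for $z=1$).

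Finally I would invoke continuity of $\log_2(1+\cdot)$ together with Lemma~1 (which already asserts the vanishing of the difference between $\tilde R_{A_i}$ and the log expression) to transfer the asymptotic identity from the SINR denominator to the rate itself, completing both branches of \eqref{Corollary 1 RA}. The only delicate point, and really the only place where care is needed, is the bookkeeping of the cross term $\kappa_t^2\cdot(\kappa_r^2/N)$ appearing inside $E_i$: this product scales as $N^{2z-1}$ and is therefore subdominant to $N^z$ exactly when $z<1$ but of the same order $N$ when $z=1$, which is precisely what causes the two regimes in the corollary to differ and must be handled consistently when collecting the constants $\tilde\mu$ versus $\bar\mu$.
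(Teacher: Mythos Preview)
Your approach is the same as the paper's: substitute the scaling into Lemma~1, classify the seven denominator terms by their $N$-order, and keep only $D_i$ and $E_i$. Your explicit computations of $D_i$, $E_i$, and the two pieces of $J$ are correct and in fact more detailed than the paper's own proof, which merely records that $D_i=\mathcal{O}(N^z)$ and $E_i=\mathcal{O}(N^z+N^{2z-1})$ and then states the condition $1-\max(z,2z-1)\ge 0$.

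One small slip to fix: in your opening paragraph you write that it suffices to show $(A_i+\cdots+G_i)/N^{1-z}$ converges. The correct normalization is $N^{z}$, not $N^{1-z}$: since $D_i+E_i\sim c\,N^z$, one has $(A_i+\cdots+G_i)/N^{z}\to c$ and hence $N/(A_i+\cdots+G_i)\sim N^{1-z}/c$, which is exactly the argument of the logarithm in the $0<z<1$ branch. Your later paragraphs implicitly use the right normalization (you pull out $N^z/(\sigma_{g_{di}}^2\sigma_{h_{ui}}^4)$ and then observe that dividing $N$ by this leaves $N^{1-z}$), so the slip is only in the framing sentence and does not affect the substance of the argument.
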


\begin{proof}
Substituting $\kappa _{\rm{r}}^{\rm{2}} = \kappa _{{\rm{0r}}}^{\rm{2}}{N^z}$ and $\kappa _{\rm{t}}^{\rm{2}} = \kappa _{{\rm{0t}}}^{\rm{2}}{N^z}$ into \eqref{An approximate ergodic achievable rate}, with  $N \to \infty$ and $0 < z \le 1$, ${A_i}$, ${B_i}$, ${C_i}$, ${F_i}$, and ${G_i}$ tend to zero. Moreover, ${D_i}$ behaves as $\mathcal{O}(N^z)$, while ${E_i}$ behaves as $\mathcal{O}(N^{z}+N^{2z-1})$. To make the numerator and denominator have the identical scaling, we can finish the proof by fulfilling $1-\max(z,2z-1) \geq 0$ as $ 0 < z\leq 1$.
\end{proof}

\textbf{Corollary \ref{coro:1}} reveals that large level of hardware impairments can be compensated by increasing number of antennas at the relay in multipair massive MIMO two-way relaying systems. Furthermore, the EVM at relay is defined as ${\text{EVM}}  = \kappa$ \cite{bjornson2014massive}. Considering the condition of $\kappa^2=\kappa_0^2N^z$ in \textbf{Corollary 1} and $z=1$, it is easy to have $\text{EVM}^2=\kappa_0^2N$, which means the EVM can be increased proportionally to $N^{1/2}$. Thus, for the negligible SE loss, we can replace 8 high-quality antennas with $\text{EVM}=0.05$ with 128 low-quality antennas with $\text{EVM}=0.2$. This encouraging result enable reducing the power consumption and cost of the multipair massive MIMO two-way FD relay system.

In the following, we evaluate the EE of the multipair massive MIMO two-way FD relay system when the number of relay antennas becomes large. The EE is defined as the ratio of the sum SE to the total power consumption of the system \cite{ngo2013energy}. Considering the classical architecture where each antenna is connected to one RF chain. The total power consumption of the system can be modeled as \cite{zhang2016JSAC}
\begin{align}\label{P_total}
{P_\text{total}} = \left( {N + 2K} \right)\left( {{P_t} + {P_r}} \right) + {P_0} + \left( {2K{P_U} + {P_R}} \right)/\varphi,
\end{align}
where $P_t$ and $P_r$ are the power of RF chains at the transmitter and receiver, respectively. Moreover, $P_0$ denotes the power of the static circuits, the term $2K P_U + P_R$ is the total power of the power amplifiers at devices and the relay, and $\varphi$ denotes the efficiency of the power amplifier in each RF chain. Thus, the EE of the considered system is given by $EE = {R_\text{sum}}/{{{P_\text{total}}}}$.

\section{Numerical Results}\label{numerical}
In this section, the derived results of the multipair massive MIMO two-way FD relay system with hardware impairments and AF schemes are validated through Monte-Carlo simulations by averaging over $10^4$ independent channel samples. Similar to previous works \cite{jin2015ergodic,zhang2016JSAC}, we set $P_u=10$ W, $P_R=40$ W and normalize $\sigma _{{R}}^2=\sigma _{{A_i}}^2=\sigma _{{B_i}}^2=1$ for $i = 1,\ldots,K$. Furthermore, without loss of generality, we simply set the same values for both loop and inter-device interferences as $\sigma^2=\sigma _{\text{LIr}}^2 =\sigma _{k,i}^2=1$ $\left(i \in {U_A} \cup {U_B},k = 1, \ldots ,2K \right)$ and $\kappa_0=\kappa _{{\rm{0r}}}=\kappa _{{\rm{0t}}}$ , respectively \cite{ngo2014multipair,zhang2016spectral}.

\begin{figure}[t]
\centering
\includegraphics[scale=0.65]{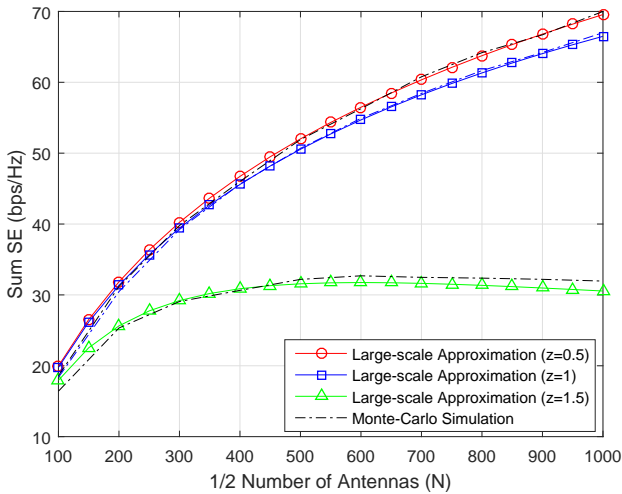}
\caption{Hardware scaling law of multipair massive MIMO two-way FD relay systems against different number of antennas $N$ at the relay ($K=10$, $\kappa_0=0.0156$).}
\label{Fig.1}
\end{figure}

The simulated and analytical asymptotic sum SE \eqref{An approximate ergodic achievable rate} are plotted as a function of the half number of antennas $N$ at the relay in Fig. \ref{Fig.1}. The simulation results validate the tightness of the derived large-scale approximations. Moreover, Fig. \ref{Fig.1} validates the hardware scaling law established by \textbf{Corollary \ref{coro:1}}. The SE grows with low levels of hardware impairments $(z=0.5, 1)$. However, the SE curve asymptotically bend toward zero when the scaling law is not satisfied ($z=1.5$).

Note that the analytical curves plotted in Fig. \ref{Fig.1} are not always below the simulated curves. This is due to the reason that we utilize the large number law to derive the SE. When $N$ is relatively small (e.g. $N<350$), the low order term of $N$ cannot be omitted. Thus, the analytical result is a little larger than the corresponding simulation result. However, even if $N$ is smaller, the curves of large-scale approximation and Monte-Carlo simulations are close \cite{jin2015ergodic}. Compared with the DF scheme in \cite{xia2015hardware}, the signal processing can achieve a smaller sum SE. However, the complexity of AF based system at the relay is much lower than the DF based system.

\begin{figure}[t]
\centering
\includegraphics[scale=0.65]{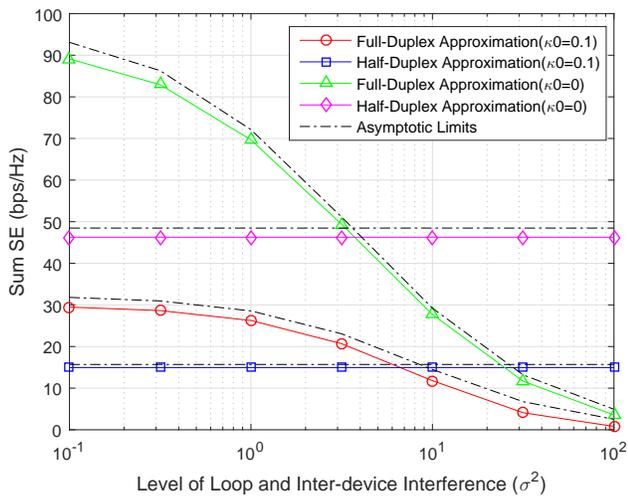}
\caption{Sum SE of multipair massive MIMO two-way FD relay systems with hardware impairments against different levels of loop and inter-device interference ($N=1000$, $z=1$, $K=10$).}
\label{Fig.4}
\end{figure}

Fig. \ref{Fig.4} shows the large-scale approximation \eqref{An approximate ergodic achievable rate} and the asymptotic SE limit \eqref{Corollary 1 RA} against the levels of the loop and inter-device interferences $\sigma^2$. The SE of such system in HD model is also plotted as a baseline for comparison. Since the HD system utilizes two phases to transmit and receive signal, the inherent loop and inter-device interference do not exist. Therefore, the SE of HD systems is constant in Fig. \ref{Fig.4}. The first observation from Fig. \ref{Fig.4} is that when $N=1000$, the asymptotic SE limit of FD systems outperforms the one of HD systems for small and moderate levels of interference, e.g., $\sigma^2<10^{1.18}$ for ideal hardware ($\kappa_0=0$) and $\sigma^2<10^{0.9}$ for non-ideal hardware ($\kappa_0=0.1$). This can be explained that only the half time required in the FD mode compared with the HD mode. Interestingly, the SE of the multipair massive MIMO two-way FD relay system with hardware impairments is larger than the one with ideal hardware for small and moderate levels of loop and inter-device interference. However, large value of loop and inter-device interference decreases the sum SE of FD systems. Moreover, the gap of SE curves between FD and HD systems increases with the level hardware impairments $\kappa_0$.

\begin{figure}[t]
\centering
\includegraphics[scale=0.65]{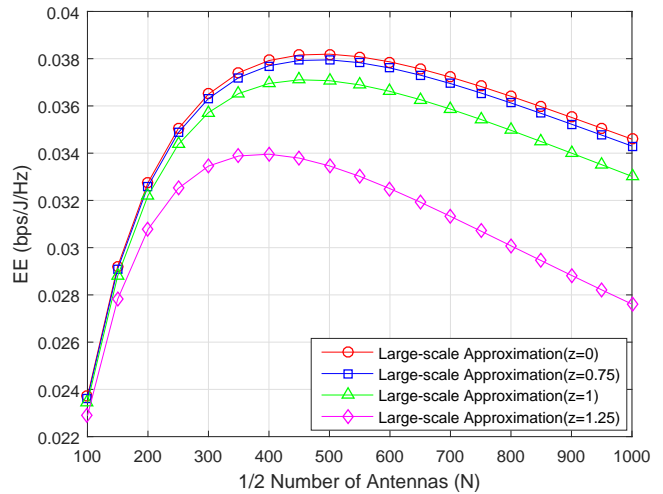}
\caption{EE of multipair massive MIMO two-way FD relay systems with hardware impairments against different number of antennas $N$ at the relay ($K=10$, $\kappa_0=0.0156$).}
\label{Fig.5}
\end{figure}

The large-scale approximation of EE as a function of the number of antennas $N$ at the relay is plotted in Fig. \ref{Fig.5}. Similar to \cite{Li2013Throughput}, we set $P_t=1 $ W, $P_r=0.3 $ W, $P_0=2$ W and $\varphi=0.35$. It is clear to see that the EE decreases with $z$ due to the distortion noise caused by hardware impairments. Moreover, there exists an optimal number of antennas $N_\text{opt}$ to reach the corresponding maximum EE. When $N \leq N_\text{opt}$, the EE can be improved by increasing $N$. However, when $N > N_\text{opt}$, increasing $N$ will reduce the EE since the addition power consumption of RF chains and static circuits dominate the performance.


\section{Conclusions}\label{se:Conclusion}
In this paper, we investigate the SE and EE of AF-based multipair massive MIMO two-way FD relay systems with hardware impairments. The effect of $N$ and $\kappa$ on the SE has been investigated by deriving a closed-form large-scale approximate expression. In addition, the optimal number of relay antennas has been derived to maximize the EE. We also find that the SE of the massive MIMO two-way FD system with hardware impairments outperforms that of the HD system when the level of loop and inter-device interference is small and moderate. Finally, an useful hardware scaling law has been established to prove that low-cost hardware can be deployed at the relay due to the huge degrees-of-freedom brought by massive antennas.

\vspace{-0.5cm}
\appendix\label{sec:proof}
From \eqref{ergodic achievable rate low bound}, we can rewrite ${\mathbb{E}}\left\{ {{{\left[ {{\text{SINR}}_{{A_i}}} \right]}^{ - 1}}} \right\}$ as
\begin{align}\label{expression of EAi 1}
&{\mathbb{E}}\left\{ {\frac{1}{{SIN{R_{{A_i}}}}}} \right\} = \frac{{{P_R}}}{{N{P_U}}}{\mathbb{E}}\left\{ {\frac{{{{\left\| {{\bf{g}}_{di}^T{\bf{F}}{{\bf{G}}_{RR}}} \right\|}^2}}}{{{{\left| {{\bf{g}}_{di}^T{\bf{F}}{{\bf{h}}_{ui}}} \right|}^2}}}} \right\} \notag \\
&+ \frac{{\sigma _R^2}}{{{P_U}}}{\mathbb{E}}\left\{ {\frac{{{{\left\| {{\bf{g}}_{di}^T{\bf{F}}} \right\|}^2}}}{{{{\left| {{\bf{g}}_{di}^T{\bf{F}}{{\bf{h}}_{ui}}} \right|}^2}}}} \right\} + \frac{1}{{{\rho ^2}{P_U}}}{\mathbb{E}}\left\{ {\frac{{{{\left| {{\bf{g}}_{di}^T{{\bf{\eta }}_t}} \right|}^2}}}{{{{\left| {{\bf{g}}_{di}^T{\bf{F}}{{\bf{h}}_{ui}}} \right|}^2}}}} \right\} \notag \\
&+ \frac{{\sum\limits_{i,k \in {U_A}} {\sigma _{i,k}^2} }}{{{\rho ^2}}}{\mathbb{E}}\left\{ {\frac{1}{{{{\left| {{\bf{g}}_{di}^T{\bf{F}}{{\bf{h}}_{ui}}} \right|}^2}}}} \right\} + \frac{{\sigma _{{A_i}}^2}}{{{\rho ^2}{P_U}}}{\mathbb{E}}\left\{ {\frac{1}{{{{\left| {{\bf{g}}_{di}^T{\bf{F}}{{\bf{h}}_{ui}}} \right|}^2}}}} \right\}\notag \\
&+\sum\limits_{j = 1,j \ne i}^K {\left( {{\mathbb{E}}\left\{ {\frac{{{{\left| {{\bf{g}}_{di}^T{\bf{F}}{{\bf{g}}_{uj}}} \right|}^2}}}{{{{\left| {{\bf{g}}_{di}^T{\bf{F}}{{\bf{h}}_{ui}}} \right|}^2}}}} \right\} + {\mathbb{E}}\left\{ {\frac{{{{\left| {{\bf{g}}_{di}^T{\bf{F}}{{\bf{h}}_{uj}}} \right|}^2}}}{{{{\left| {{\bf{g}}_{di}^T{\bf{F}}{{\bf{h}}_{ui}}} \right|}^2}}}} \right\}} \right)}\notag \\
&+ \frac{1}{{{P_U}}}{\mathbb{E}}\left\{ {\frac{{{{\left| {{\bf{g}}_{di}^T{\bf{F}}{{\bf{\eta }}_r}} \right|}^2}}}{{{{\left| {{\bf{g}}_{di}^T{\bf{F}}{{\bf{h}}_{ui}}} \right|}^2}}}} \right\}.
\end{align}
According to the law of large numbers, we have
\begin{align}
&\frac{1}{N}{\bf{g}}_{di}^T{\bf{F}}{{\bf{h}}_{ui}} \!-\! \frac{1}{N}{\left\| {{\bf{g}}_{di}^*} \right\|^2}{\left\| {{{\bf{h}}_{ui}}} \right\|^2} \!\xrightarrow[{N \to \infty }]{}  0,\notag\\
&\frac{1}{N}{\bf{g}}_{di}^T{\bf{F}}{{\bf{h}}_{uj}} \!-\! \frac{1}{N}\left( {{{\left\| {{\bf{g}}_{di}^*} \right\|}^2}{\bf{h}}_{ui}^H{{\bf{h}}_{uj}} \!+\! {{\left\| {{{\bf{h}}_{uj}}} \right\|}^2}{\bf{g}}_{di}^T{\bf{g}}_{dj}^*} \right) \!\xrightarrow[{N \to \infty }]{}  0,\notag\\
& \frac{1}{N}{\bf{g}}_{di}^T{\bf{F}}{{\bf{g}}_{uj}} \!-\! \frac{1}{N}\left( {{{\left\| {{\bf{g}}_{di}^*} \right\|}^2}{\bf{h}}_{ui}^H{{\bf{g}}_{uj}} \!+\! {{\left\| {{{\bf{g}}_{uj}}} \right\|}^2}{\bf{g}}_{di}^T{\bf{h}}_{dj}^*} \right)\!\xrightarrow[{N \to \infty }]{}  0,\notag\\
&\frac{1}{N}{\left\| {{\bf{g}}_{di}^T{\bf{F}}} \right\|^2} \!-\! \frac{1}{N}{\left\| {{\bf{g}}_{ui}^*} \right\|^4}{\left\| {{{\bf{h}}_{di}}} \right\|^2}\! \xrightarrow[{N \to \infty }]{}  0.\notag
\end{align}
When $N \to \infty$, we further have
\begin{align}
&{\mathbb{E}}\left\{ {\frac{{{\bf{h}}_{ui}^H{{\bf{h}}_{uj}}}}{{{{\left\| {{{\bf{h}}_{ui}}} \right\|}^2}}} \!+\! \frac{{{{\left\| {{{\bf{h}}_{uj}}} \right\|}^2}{\bf{g}}_{di}^T{\bf{g}}_{dj}^*}}{{{{\left\| {{{\bf{h}}_{ui}}} \right\|}^2}{{\left\| {{\bf{g}}_{di}^*} \right\|}^2}}}} \right\} \!-\! \frac{1}{N}\left( {\frac{{\sigma _{{h_{uj}}}^2}}{{\sigma _{{h_{ui}}}^2}} \!+\! \frac{{\sigma _{{h_{uj}}}^4\sigma _{{g_{dj}}}^2}}{{\sigma _{{h_{ui}}}^4\sigma _{{g_{di}}}^2}}} \right)\! \xrightarrow[{N \to \infty }]{}  0,\notag\\
&{\mathbb{E}}\left\{ {\frac{{{\bf{h}}_{ui}^H{{\bf{g}}_{uj}}}}{{{{\left\| {{{\bf{h}}_{ui}}} \right\|}^2}}} \!+\!  \frac{{{{\left\| {{{\bf{g}}_{uj}}} \right\|}^2}{\bf{g}}_{di}^T{\bf{h}}_{dj}^*}}{{{{\left\| {{\bf{g}}_{di}^*} \right\|}^2}{{\left\| {{{\bf{h}}_{ui}}} \right\|}^2}}}} \right\} \!-\!  \frac{1}{N}\left( {\frac{{\sigma _{{g_{uj}}}^2}}{{\sigma _{{h_{ui}}}^2}} \!+\!  \frac{{\sigma _{{g_{uj}}}^4\sigma _{{h_{dj}}}^2}}{{\sigma _{{h_{ui}}}^4\sigma _{{g_{di}}}^2}}} \right)\xrightarrow[{N \to \infty }]{}  0,\notag\\
&{\mathbb{E}}\left\{ {\frac{1}{{{{\left\| {{{\bf{h}}_{ui}}} \right\|}^2}}}} \right\} - \frac{1}{{N\sigma _{{h_{ui}}}^2}} \xrightarrow[{N \to \infty }]{}  0,\notag\\
&{\mathbb{E}}\left\{ {\frac{{{{\left| {{\bf{g}}_{di}^T{\bf{F}}{{\bf{\eta }}_r}} \right|}^2}}}{{{{\left| {{\bf{g}}_{di}^T{\bf{F}}{{\bf{h}}_{ui}}} \right|}^2}}}} \right\} \!-\! \frac{{\kappa _r^2\xi}}{{NP_U\sigma _{{h_{ui}}}^2}} \xrightarrow[{N \to \infty }]{}  0.
\end{align}
\begin{align}\label{C1}
&{\mathbb{E}}\left\{ {\frac{{{{\left| {{\bf{g}}_{di}^T{{\bf{\eta }}_t}} \right|}^2}}}{{{{\left| {{\bf{g}}_{di}^T{\bf{F}}{{\bf{h}}_{ui}}} \right|}^2}}}} \right\} - \frac{{\kappa _t^2{P_R}}}{{{N^4}\sigma _{{g_{di}}}^2\sigma _{{h_{ui}}}^4}} \xrightarrow[{N \to \infty }]{}  0,\notag\\
&{\mathbb{E}}\left\{ {\frac{{{{\left\| {{\bf{g}}_{di}^T{\bf{F}}{{\bf{G}}_{RR}}} \right\|}^2}}}{{{{\left| {{\bf{g}}_{di}^T{\bf{F}}{{\bf{h}}_{ui}}} \right|}^2}}}} \right\} - \frac{{\sigma _{LIr}^2}}{{\sigma _{{h_{iu}}}^2}} \xrightarrow[{N \to \infty }]{}  0,\notag\\
&{\mathbb{E}}\left\{ {\frac{1}{{{{\left| {{\bf{g}}_{di}^T{\bf{F}}{{\bf{h}}_{ui}}} \right|}^2}}}} \right\} - \frac{1}{{{N^4}\sigma _{{g_{di}}}^4\sigma _{{h_{ui}}}^4}} \xrightarrow[{N \to \infty }]{}  0.
\end{align}
Moreover, we have the following approximations
\begin{align}\label{rho ^2}
&{\mathbb{E}}\left\{ {{{\left\| {{\bf{FA}}} \right\|}^2}} \right\} - {N^3}\sum\limits_{i = 1}^K {\left( {\sigma _{{g_{ui}}}^4\sigma _{{h_{di}}}^2 + \sigma _{{g_{di}}}^2\sigma _{{h_{ui}}}^4} \right)}  \xrightarrow[{N \to \infty }]{}  0,\notag \\
&{\mathbb{E}}\left\{ {{{\left\| {{\bf{F}}{{\bf{G}}_{RR}}} \right\|}^2}} \right\} - {N^3}\sigma _{LIr}^2\sum\limits_{i = 1}^K {\left( {\sigma _{{g_{ui}}}^2\sigma _{{h_{di}}}^2 + \sigma _{{g_{di}}}^2\sigma _{{h_{ui}}}^2} \right)}   \xrightarrow[{N \to \infty }]{}  0,\notag \\
&{\mathbb{E}}\left\{ {{{\left\| {{\bf{F}}{{\bf{\eta }}_r}} \right\|}^2}} \right\} \!-\! \kappa _r^2{N^2}\xi \sum\limits_{i = 1}^K {\left( {\sigma _{{g_{ui}}}^2\sigma _{{h_{di}}}^2 \!+\! \sigma _{{g_{di}}}^2\sigma _{{h_{ui}}}^2} \right)}/{P_U} \xrightarrow[{N \to \infty }]{}  0,\notag \\
&{\mathbb{E}}\left\{ {{{\left\| {\bf{F}} \right\|}^2}} \right\} - \sigma _R^2{N^2}\sum\limits_{i = 1}^K {\left( {\sigma _{{g_{ui}}}^2\sigma _{{h_{di}}}^2 + \sigma _{{g_{di}}}^2\sigma _{{h_{ui}}}^2} \right)}  \xrightarrow[{N \to \infty }]{}  0,\notag \\
&{\mathbb{E}}\left\{ {{\rho ^2}} \right\} - \frac{{{P_R}}}{{{N^3}J}}\xrightarrow[{N \to \infty }]{}  0.
\end{align}
By substituting \eqref{C1}-\eqref{rho ^2} into \eqref{expression of EAi 1}, we can complete the proof after some simplifications.

%

\bibliographystyle{IEEEtran}
\bibliography{IEEEabrv,Ref}

\end{document}